\renewcommand{\ketbra}[2]{\ket{#1}\!\bra{#2}} %
\theoremstyle{definition}
\newtheorem{thm}{Theorem}
\newtheorem{prop}[thm]{Proposition}
\newcommand{\R}{\mathbb{R}}
\newcommand{\C}{\mathbb{C}}
\newcommand{\Hilb}{\mathcal{H}}
\renewcommand{\P}{\mathcal{P}}
\newcommand{\CZ}{\mathbf{CZ}}
\newcommand{\G}{\mathcal{G}}
\DeclareMathOperator{\Tr}{Tr}
\DeclareMathOperator{\supp}{supp}
\renewcommand\paragraph[1]{%
  \par\emph{#1---}\kern2pt\relax\ignorespaces}
\mathchardef\standardl=\mathcode`l
\newcommand{\deactivatel}{\mathcode`l=\standardl}
\edef\operator@font{\operator@font\noexpand\deactivatel}
\begin{document}
\newcommand{\nameoftitle}{%
Exact Thermal Stabilizer Eigenstates at Infinite Temperature
}

\title{\nameoftitle}

\author{Akihiro Hokkyo}
\email{hokkyo@cat.phys.s.u-tokyo.ac.jp}
\affiliation{Department of Physics, Graduate School of Science, The University of Tokyo, 7-3-1 Hongo, Bunkyo, Tokyo 113-8654, Japan}

\begin{abstract}
Understanding how microscopic few-body interactions give rise to thermal behavior in isolated quantum many-body systems remains a central challenge in nonequilibrium statistical mechanics. While individual energy eigenstates are expected to reproduce thermal equilibrium values, analytic access to highly entangled thermal eigenstates of nonintegrable Hamiltonians remains scarce.
In this Letter, we construct exact infinite-temperature eigenstates of generically nonintegrable two-body Hamiltonians using stabilizer states. These states can fully reproduce thermal expectation values for all spatially local observables, 
extending previously known Bell-pair-based constructions to a broader class. 
At the same time, we prove a sharp no-go theorem: stabilizer eigenstates of two-body Hamiltonians cannot satisfy microscopic thermal equilibrium for all four-body observables. 
This bound is tight, as we explicitly construct a translationally invariant Hamiltonian whose stabilizer eigenstate is thermal for all two-body and three-body observables as well as 
all spatially local observables.
Our results suggest that reproducing higher-order thermal correlations requires nonstabilizer degrees of freedom, providing analytic insight into the interplay between interaction locality, microscopic thermal equilibrium, and quantum computational complexity.
\end{abstract}

\maketitle
\paragraph{Introduction}
How an isolated quantum many-body system, evolving under reversible unitary dynamics, can nonetheless exhibit irreversible thermalization is a central question in quantum statistical mechanics.
Recent experiments on artificial quantum many-body systems, such as ultracold atomic gases and trapped ions, have demonstrated thermalization dynamics while remaining effective isolation from the environment~\cite{kaufmanQuantumThermalizationEntanglement2016,closTimeResolvedObservationThermalization2016}.
On the theoretical side, the eigenstate thermalization hypothesis (ETH) has emerged as a leading framework, 
which asserts that individual energy eigenstates of generic interacting Hamiltonians reproduce thermal expectation values of observables~\cite{neumannBeweisErgodensatzesUnd1929,deutschQuantumStatisticalMechanics1991,srednickiChaosQuantumThermalization1994,rigolThermalizationItsMechanism2008}.
This picture is supported by extensive numerical studies~\cite{rigolThermalizationItsMechanism2008,steinigewegPushingLimitsEigenstate2014,beugelingFinitesizeScalingEigenstate2014,kimTestingWhetherAll2014,garrisonDoesSingleEigenstate2018}.

Despite this progress, fully analytic access to thermal eigenstates of concrete and physically realistic many-body Hamiltonians remains extremely limited.
Random-matrix approaches successfully capture many aspects of thermal behavior~\cite{wignerStatisticalDistributionWidths1951,bohigasCharacterizationChaoticQuantum1984,goldsteinApproachThermalEquilibrium2010,reimannGeneralizationNeumannsApproach2015}, 
but they do not incorporate essential physical constraints such as locality and the few-body nature of interactions~\cite{hamazakiAtypicalityMostFewBody2018}.
As a result, rigorous analytic results are largely restricted to special settings, including nonthermal counterexamples such as quantum many-body scars~\cite{shiraishiSystematicConstructionCounterexamples2017,moudgalyaExactExcitedStates2018} and Hilbert-space fragmentation~\cite{salaErgodicityBreakingArising2020,khemaniLocalizationHilbertSpace2020,moudgalyaHilbertSpaceFragmentation2022}, 
as well as proofs in constrained regimes such as free systems~\cite{shiraishiNatureAbhorsVacuum2024,tasakiMacroscopicIrreversibilityQuantum2025,roosMacroscopicThermalizationHighly2025} or low temperatures~\cite{kuwaharaEigenstateThermalizationClustering2020}.

A key obstacle underlying these challenges is the intrinsic complexity of thermal pure states.
Such states exhibit volume-law entanglement, whereas many analytically and numerically tractable state families, such as matrix product states~\cite{schollwockDensitymatrixRenormalizationGroup2011}, usually obey area-law or sub-volume-law entanglement scaling.
This mismatch highlights a fundamental tension between the entanglement structure required for thermalization and that accessible within controlled analytic frameworks, motivating the search for alternative highly entangled yet exactly tractable state constructions.

Recently, progress has been made through the construction of the first explicit families of analytically tractable thermal energy eigenstates.
In particular, the entangled antipodal pair (EAP) states~\cite{chibaExactThermalEigenstates2024,ivanovVolumeentangledExactEigenstates2024b,yonetaThermalPureStates2024,mohapatraExactVolumeLawEntangled2025,mestyanCrosscapStatesTunable2025}, 
built from long-range entangled Bell-pair dimers, 
reproduce thermal expectation values at infinite temperature for all spatially local observables, providing concrete realizations of microscopic thermal equilibrium (MITE) for local subsystems~\cite{goldsteinThermalEquilibriumMacroscopic2015,goldsteinMacroscopicMicroscopicThermal2017,moriThermalizationPrethermalizationIsolated2018}.
However, these states already deviate from thermal behavior at the level of two-body observables, 
suggesting a qualitative distinction between spatially local thermal equilibrium and thermal equilibrium for few-body observables.
This therefore raises the question of whether this limitation reflects a peculiarity of such dimer-based constructions or a more general constraint imposed by few-body interactions.

In this Letter, we address this question by developing a stabilizer-based framework
for constructing analytically tractable yet highly entangled energy eigenstates
of nonintegrable many-body Hamiltonians.
Stabilizer states~\cite{gottesmanStabilizerCodesQuantum} and their subclass of graph states~\cite{heinMultipartyEntanglementGraph2004,heinEntanglementGraphStates2006}
form a well-studied family of highly entangled states in quantum computation
and provide a controlled setting in which reduced density matrices can be evaluated exactly and can exhibit volume-law entanglement.
Notably, this framework naturally incorporates previously known dimer-based EAP states and enables a systematic exploration of thermalization constraints beyond such constructions.

Our main results establish a sharp and tight no-go theorem.
We show that while stabilizer eigenstates of two-body Hamiltonians can realize microscopic thermal equilibrium at infinite temperature for all spatially local observables, they cannot satisfy microscopic thermal equilibrium for all four-body observables.
This bound is tight, as demonstrated by explicit constructions of translationally invariant nonintegrable Hamiltonians whose stabilizer eigenstates are thermal for all two-body and three-body observables, as well as for all spatially local observables.
We further identify the structural origin of this constraint by characterizing when a stabilizer state can appear as a zero-energy eigenstate, suggesting that reproducing higher-order few-body thermal correlations under few-body interactions generically requires nonstabilizer, or ``magic''~\cite{veitchResourceTheoryStabilizer2014}, degrees of freedom.

\paragraph{Local and few-body microscopic thermal equilibrium}
We consider an $N$-qubit system with sites labeled by $i\in [N]\coloneqq\{1,\dots,N\}$ and Hilbert space $\Hilb_{[N]}\coloneqq\bigotimes_{i\in[N]}\Hilb_i$ with $\Hilb_i\cong\C^2$.
For a given Hamiltonian $\hat{H}$ and a subsystem $A\subset [N]$, 
a state $\rho$ is said to be in \emph{microscopic thermal equilibrium} (MITE) on $A$
if the reduced state $\rho_A=\Tr_{[N]\setminus A}\rho$
is indistinguishable from the reduced state of a Gibbs state
$\tau_\beta\propto e^{-\beta \hat{H}}$ with the same energy expectation value~\cite{goldsteinThermalEquilibriumMacroscopic2015,goldsteinMacroscopicMicroscopicThermal2017,moriThermalizationPrethermalizationIsolated2018}.
In this Letter, we restrict attention to the infinite-temperature case ($\beta=0$),
for which this condition is equivalent to the reduced density matrix being maximally mixed.
A state is said to be in \emph{$k$-body MITE} 
if it is in MITE on every subsystem $A$ with $|A|=k$.
This means that all $k$-body observables have thermal expectation values.
Moreover, when $[N]$ is regarded as a spin chain with periodic boundary conditions,
a state is said to be in \emph{$l$-local MITE} 
if it is in MITE on all contiguous subsystems of size $l$.

\paragraph{Stabilizer eigenstates of local Hamiltonians}
We focus on stabilizer states~\cite{gottesmanStabilizerCodesQuantum}, 
including graph states~\cite{heinMultipartyEntanglementGraph2004,heinEntanglementGraphStates2006}, 
as analytically tractable yet highly entangled many-body pure states.
A subgroup $G$ of the $N$-qubit Pauli group $\P_{[N]}$
is called a stabilizer group if $G$ is abelian and does not contain $-I$.
A state $\ket{\psi_G}$ is called a stabilizer state associated with $G$
if it is the simultaneous eigenstate with eigenvalue $+1$
of all elements of $G$.
Throughout this Letter, 
$G$ is assumed to be maximal, which implies that
$\ket{\psi_G}$ is unique up to a phase factor. 
Such states admit efficient classical descriptions, 
allowing for exact evaluation
of reduced density matrices~\cite{gottesmanStabilizerCodesQuantum,fattalEntanglementStabilizerFormalism2004,heinMultipartyEntanglementGraph2004,heinEntanglementGraphStates2006}.
We consider zero-energy stabilizer eigenstates of two-body Hamiltonians, 
i.e., Hamiltonians that are sums of terms acting nontrivially on at most two sites.
We fix the additive constant of the Hamiltonian so that it is traceless,
in which case the zero energy corresponds to 
the infinite-temperature Gibbs state,
i.e., the maximally mixed state on the full Hilbert space.

As an illustrative example, 
the EAP states can be represented as stabilizer states.
For even $N=2N'$, consider a stabilizer that is generated by two-body operators of the form 
\begin{equation}
  \{\sigma_i^x\sigma_{i+N'}^x,\ \sigma_i^z\sigma_{i+N'}^z\mid 1\le i\le N'\}.
\end{equation}
The corresponding stabilizer state is a translationally symmetric EAP state $\ket{\mathrm{EAP}(I)}$, which appeared in Refs.~\cite{chibaExactThermalEigenstates2024,yonetaThermalPureStates2024}. 

\paragraph{Main results}
Our main results establish a sharp constraint on the few-body MITE property
for stabilizer eigenstates of two-body Hamiltonians.
First, we prove the following no-go theorem.
\begin{thm}[No-go]\label{thm:no-go}
Any zero-energy stabilizer eigenstate of a nonzero two-body Hamiltonian
is not in $k$-body MITE for any $k\ge 4$.
\end{thm}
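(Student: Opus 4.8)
The plan is to reduce to the borderline case $k=4$ and then run a second-moment argument. First I would observe that $k$-body MITE is monotone in $k$: at infinite temperature MITE on $A$ means $\rho_A$ is maximally mixed, so tracing out additional sites shows that $k$-body MITE implies $4$-body MITE for every $k\ge 4$ (with $N\ge k$). It therefore suffices to establish the contrapositive borderline statement, namely that \emph{if $\ket{\psi_G}$ is in $4$-body MITE and $\hat H\ket{\psi_G}=0$ for a two-body Hermitian $\hat H$, then $\hat H=0$.} I would also record the standard stabilizer fact that $\bra{\psi_G}P\ket{\psi_G}\in\{0,\pm1\}$ for any Pauli string $P$, equal to $\pm1$ exactly when $\pm P\in G$ and $0$ otherwise. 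A consequence is that $4$-body MITE is equivalent to $G$ containing no nonidentity element $g$ with $|\supp(g)|\le 4$ (i.e.\ its minimum weight exceeds $4$), since a low-weight stabilizer element supported in some size-$4$ region would obstruct maximal mixedness there.

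The key step is to square the eigenvalue equation. Because $\hat H$ is Hermitian and $\hat H\ket{\psi_G}=0$,
\begin{equation}
  \bra{\psi_G}\hat H^2\ket{\psi_G}=\lVert \hat H\ket{\psi_G}\rVert^2=0 .
\end{equation}
Expanding $\hat H=\sum_{1\le|\supp(P)|\le 2}c_P\,P$ in the Hermitian Pauli basis, with real coefficients $c_P$ and no identity term by tracelessness, this becomes $\sum_{P,Q}c_Pc_Q\,\bra{\psi_G}PQ\ket{\psi_G}=0$. Now I would evaluate each term using locality: since $P$ and $Q$ each act on at most two sites, the product $PQ$ is, up to a phase, a Pauli string of weight at most $4$. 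For $P\ne Q$ this is a nonidentity Pauli of weight $\le 4$ (distinct standard Pauli strings are never proportional), whose expectation vanishes by $4$-body MITE, as the relevant reduced state is maximally mixed and a nonidentity Pauli is traceless; for $P=Q$ one has $PQ=I$ with expectation $1$. Only the diagonal terms survive, so $0=\sum_P c_P^2$, forcing every $c_P=0$ and hence $\hat H=0$, contradicting $\hat H\ne0$.

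I expect the conceptual crux, rather than any lengthy computation, to be recognizing this second-moment trick: passing from $\hat H\ket{\psi_G}=0$ to $\bra{\psi_G}\hat H^2\ket{\psi_G}=0$ converts the problem into a statement about the minimum weight of $G$, while the product of two weight-$\le 2$ terms has weight $\le 4$, matching the threshold exactly. This also explains why the bound is sharp: at the level of $3$-body MITE ($G$ of minimum weight $\ge 4$) the weight-$4$ components of $\hat H^2$ may lie in $G$ and contribute through off-diagonal terms, so the sum-of-squares collapse fails and $\hat H$ need not vanish. The only technical point requiring care is the standard characterization of stabilizer reduced states (maximally mixed on $A$ iff no nontrivial stabilizer element is supported in $A$), which underlies both the reformulation of $4$-body MITE and the vanishing of the off-diagonal contributions.
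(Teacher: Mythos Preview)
Your proof is correct and is a genuinely different route from the paper's. The paper first establishes Proposition~\ref{prop:general_cond}, a structural characterization of all traceless Hamiltonians annihilating $\ket{\psi_G}$ (each term arises from factoring some $g\in G$ as $g\propto PQ$), and then observes that if $\delta(G)>k$ every such factorization forces $\max(|\supp P|,|\supp Q|)\ge\lceil(k+1)/2\rceil$, so a two-body $\hat H$ cannot survive for $k\ge4$. You instead bypass Proposition~\ref{prop:general_cond} entirely with the second-moment identity $0=\langle\psi_G|\hat H^2|\psi_G\rangle=\sum_{P}c_P^2$, using only the two ingredients both arguments share: (i) $k$-body MITE for a stabilizer state is equivalent to $\delta(G)>k$, and (ii) the product of two weight-$\le2$ Paulis has weight $\le4$. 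Your argument is shorter and self-contained for the no-go statement, and it immediately yields the same extension to $m$-body Hamiltonians (no $2m$-body MITE). The paper's route buys more: Proposition~\ref{prop:general_cond} is used constructively elsewhere to \emph{build} parent Hamiltonians such as Eqs.~\eqref{eq:Hamiltonian} and~\eqref{eq:Hamiltonian_star_cluster}, so the structural detour pays for itself in the achievability direction even though it is not the most economical path to Theorem~\ref{thm:no-go} alone.
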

Intuitively, stabilizer states have strictly constrained correlation structure fixed by a stabilizer group, 
and the two-body parent-Hamiltonian condition forces four-body reduced states to retain stabilizer signatures that cannot be fully erased.

We further show that this bound is tight.
\begin{thm}[Achievability and tightness]\label{thm:tight}
We regard $[N]$ as a spin chain with the periodic boundary condition.
There exists a two-body, finite-range, translationally invariant Hamiltonian (Eq.~\eqref{eq:Hamiltonian} below) admitting a stabilizer eigenstate with zero energy that is in three-body MITE and $O(N)$-local MITE.
\end{thm}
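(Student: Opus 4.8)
The plan is to prove the statement constructively: first translate both microscopic-thermal-equilibrium (MITE) conditions into purely group-theoretic statements about the stabilizer group $G$, and then exhibit an explicit $G$ (equivalently, the Hamiltonian in Eq.~\eqref{eq:Hamiltonian}) satisfying them. Recall that for a stabilizer state $\ket{\psi_G}$ the reduced state on $A$ is $\rho_A = 2^{-|A|}\sum_{g\in G_A} g|_A$, where $G_A=\{g\in G:\supp(g)\subseteq A\}$; hence $\rho_A$ is maximally mixed if and only if $G_A=\{I\}$. Consequently, three-body MITE is equivalent to the code distance being at least $4$ (no nontrivial element of $G$ has weight $\le 3$), and $O(N)$-local MITE is equivalent to the absence of any nontrivial element of $G$ whose support lies in a contiguous arc of length at most $cN$ for the target constant $c$. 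These reformulations reduce the theorem to designing a translationally invariant $G$ with a two-body parent Hamiltonian, distance $\ge 4$, and contiguous ``local distance'' $\Omega(N)$.

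First I would build the state–Hamiltonian pair through cancelling two-body pairs. The key mechanism is to take translationally invariant weight-$4$ generators that factor as a product $g=PQ$ of two disjoint, nearby two-body Pauli operators, for instance $g_i=\sigma_i^y\sigma_{i+1}^x\sigma_{i+2}^x\sigma_{i+3}^y=(\sigma_i^y\sigma_{i+1}^x)(\sigma_{i+2}^x\sigma_{i+3}^y)$. Since $g\ket{\psi_G}=\ket{\psi_G}$ and $P,Q$ commute with $P^2=Q^2=I$, one gets $P\ket{\psi_G}=Q\ket{\psi_G}$, so the two-body operator $P-Q$ annihilates $\ket{\psi_G}$. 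Summing such two-body differences over all generators yields a nonzero, traceless, finite-range, translationally invariant two-body Hamiltonian $\hat H$ that is frustration-free on $\ket{\psi_G}$, hence $\hat H\ket{\psi_G}=0$ with zero (infinite-temperature) energy. Before declaring this a valid construction, I would check that the chosen generators define a genuine pure stabilizer state: they must be mutually commuting, must not generate $-I$, and must be independent (rank $N$). The last point is a first subtlety, since a single translation orbit of an even-weight generator typically satisfies a global relation $\prod_i g_i = I$ and is therefore rank-deficient; I would remove this deficiency by enlarging the unit cell and adding a second generator type, tuned to preserve commutativity, tracelessness, and the factorization property.

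With a valid $G$ in hand, the bulk of the proof is the weight and support analysis. Passing to the binary-symplectic (polynomial/ideal) representation of the translationally invariant stabilizer group, I would enumerate all low-weight elements and show that no nontrivial product of generators has weight $\le 3$, establishing distance $\ge 4$ and therefore three-body MITE. For $O(N)$-local MITE I would show that any nontrivial element whose support is confined to a contiguous arc must use a number of consecutive generators growing linearly in the arc length, because the $X$- and $Z$-content can cancel at both ends of the arc only after propagating across $\Omega(N)$ sites; this forces every contiguous-supported nontrivial element to span more than $cN$ sites (essentially only elements wrapping the whole ring survive), giving contiguous distance $\Omega(N)$. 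Note that $c<1/2$ is unavoidable, since for a pure state $\rho_A$ cannot be maximally mixed once $|A|>N/2$.

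The hardest part will be the weight and support analysis at the extremal boundary identified by Theorem~\ref{thm:no-go}. Because that no-go result shows distance $4$ is the maximum attainable for any two-body stabilizer eigenstate, the construction necessarily sits exactly at the threshold: the generators must be rich enough to destroy all weight-$1,2,3$ elements yet structured enough to remain compatible with a two-body parent Hamiltonian and with full-rank purity. Ruling out every low-weight product, and proving that contiguous-supported elements must wrap the system, is therefore the delicate technical core; the polynomial formalism for translationally invariant Pauli groups is the natural tool to make these cancellation arguments rigorous and to pin down the optimal constant $c$.
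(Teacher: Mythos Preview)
Your group-theoretic reformulation of MITE is correct, and the annihilation mechanism $(P-Q)\ket{\psi_G}=0$ for $PQ\in G$ is exactly the device the paper uses. But your proposed construction cannot achieve $O(N)$-local MITE. If you take generators such as $g_i=\sigma_i^y\sigma_{i+1}^x\sigma_{i+2}^x\sigma_{i+3}^y$ supported on a contiguous block of four sites, then $g_i$ itself lies in $G_A$ for the arc $A=\{i,i+1,i+2,i+3\}$, so $\rho_A$ is not maximally mixed; your own criterion then forbids even $4$-local MITE. Your later claim that contiguous-supported elements must involve $\Omega(N)$ generators is false here: a single generator already has short contiguous support. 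The polynomial formalism cannot rescue this, because the obstruction is the generator itself, not some exotic low-weight product.

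The missing idea is that the factorization $g=PQ$ must have $P$ and $Q$ each of short range (so that $P-Q$ is a finite-range two-body term) while $\supp(g)=\supp(P)\cup\supp(Q)$ is \emph{not} contained in any short arc (so that local MITE survives). This forces $P$ and $Q$ to be antipodally separated on the ring. The paper implements precisely this via the graph state on the long-range graph with edges $\{i,i+N/2-1\}$ and $\{i,i+N/2\}$: its generators $K_i=\sigma^x_i\sigma^z_{i+N/2-1}\sigma^z_{i+N/2}\sigma^z_{i+N/2+1}$ have antipodal support, and $K_iK_{i+1}=(\sigma^x_i\sigma^x_{i+1})(\sigma^z_{i+N/2-1}\sigma^z_{i+N/2+2})$ factors into two nearby two-body pieces sitting on opposite sides of the ring, giving the finite-range Hamiltonian~\eqref{eq:Hamiltonian}. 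Because every nontrivial stabilizer straddles antipodal regions, one obtains $(N/2-1)$-local MITE; three-body MITE is then verified by a direct case analysis with the graph-state criterion rather than the polynomial formalism, and the graph-state framework automatically guarantees maximality, sidestepping the rank/purity issue you raise.
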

Together, Theorems~\ref{thm:no-go} and~\ref{thm:tight} show that,
among stabilizer eigenstates of two-body Hamiltonians at infinite temperature,
the largest integer $k$ for which $k$-body MITE can be achieved is $k=3$.
This result shows that stabilizer states go beyond existing dimer-based constructions
by reproducing thermal equilibrium not only for one-body but also for two-body and three-body observables,
while simultaneously exposing a stringent constraint imposed by few-body interactions. 
It further suggests that, under the physical constraint of few-body interactions, capturing higher-order correlations requires nonstabilizerness, 
or \emph{magic}~\cite{veitchResourceTheoryStabilizer2014}, 
as possessed by typical many-body quantum states~\cite{turkeshiPauliSpectrumNonstabilizerness2025}.

\paragraph{Explicit construction saturating the bound.}
We next present an explicit example of a two-body Hamiltonian whose zero-energy
stabilizer eigenstate saturates the bound of Theorem~\ref{thm:no-go}.
Let $N\ge8$ be even, and consider the following Hamiltonian with the periodic boundary condition:
\begin{equation}
    \hat{H}=J\sum_{i\in[N]}(\sigma^z_i\sigma^z_{i+3}-\sigma^x_i\sigma^x_{i+1}),
    \label{eq:Hamiltonian}
\end{equation}
where $J\in\R$ is an arbitrary real constant.
Due to the presence of the medium-range interaction
$\sigma^z_i\sigma^z_{i+3}$, this Hamiltonian is expected to be nonintegrable.
We have numerically verified that 
the distribution of energy-level spacings~\cite{bohigasCharacterizationChaoticQuantum1984,oganesyanLocalizationInteractingFermions2007,atasDistributionRatioConsecutive2013,dalessioQuantumChaosEigenstate2016} of the Hamiltonian~\eqref{eq:Hamiltonian} 
follows that of the Gaussian orthogonal ensemble (see End Matter), 
providing strong evidence of nonintegrability.

\begin{figure}[tbp]
    \centering
    \includegraphics[width=0.7\linewidth]{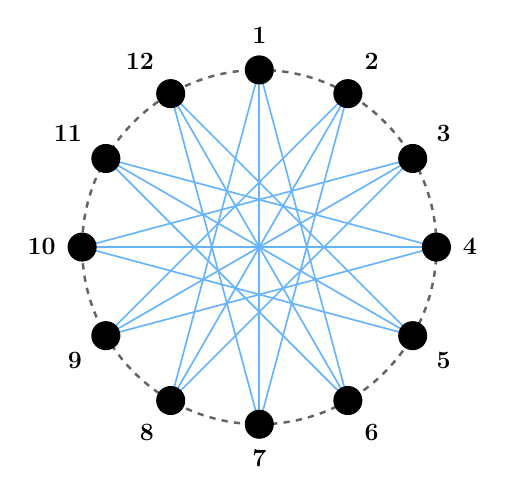}
    \caption{Schematic illustration of the graph $\G_1$ (Eq.~\eqref{eq:edge_graph}) 
    for $N=12$.
    Each vertex $i$ is connected to three vertices
    $i+N/2-1$, $i+N/2$, and $i+N/2+1$.
    The corresponding graph state is in three-body MITE and also in five-local MITE. }
    \label{fig:graph}
\end{figure}

This Hamiltonian admits as an eigenstate the graph state $\ket{\G_1}$
associated with a graph $\G_1$ (Fig.~\ref{fig:graph}) whose edge set is given by
\begin{equation}
    E_{\G_1}=\{\{i,i+N/2-1\},\{i,i+N/2\}\mid i\in [N]\}.
    \label{eq:edge_graph}
\end{equation}
Explicitly, the graph state $\ket{\G_1}$ is given by
\begin{equation}
    \ket{\G_1}=\prod_{\{i,j\}\in E_{\G_1}}\CZ_{ij}
    \bigotimes_{i\in[N]}\ket{+}_i,
\end{equation}
where $\CZ_{ij}$ denotes the controlled-$Z$ gate acting on sites $i$ and $j$,
and $\ket{+}$ is the eigenstate of $\sigma^x$ with eigenvalue $+1$.

We now show that $\ket{\G_1}$ is a zero-energy eigenstate of
Eq.~\eqref{eq:Hamiltonian}.
By general properties of graph states~\cite{heinEntanglementGraphStates2006}, 
the following operators are stabilizers
of $\ket{\G_1}$:
\begin{equation}
    K_{\G_1}^{(i)}=
    \sigma^x_i
    \sigma^z_{i+N/2-1}
    \sigma^z_{i+N/2}
    \sigma^z_{i+N/2+1}.
\end{equation}
Therefore,
$K_{\G_1}^{(i)}K_{\G_1}^{(i+1)}
=\sigma^x_i\sigma^x_{i+1}
 \sigma^z_{i+N/2-1}\sigma^z_{i+N/2+2}$ 
 is a stabilizer of $\ket{\G_1}$.
Thus, we obtain
\begin{align}
    &\sigma^x_i\sigma^x_{i+1}
     \sigma^z_{i+N/2-1}\sigma^z_{i+N/2+2}
     \ket{\G_1}
     =\ket{\G_1}\nonumber\\
    \Leftrightarrow\;
    &(\sigma^z_{i+N/2-1}\sigma^z_{i+N/2+2}
      -\sigma^x_i\sigma^x_{i+1})
      \ket{\G_1}=0.
      \label{eq:factorized_annihilation}
\end{align}
Summing this identity over $i$ yields $\hat{H}\ket{\G_1}=0$.

Finally, we verify that this state satisfies the MITE condition.
General properties of graph states~\cite{heinEntanglementGraphStates2006} yield 
the following criterion: 
$\ket{\G_1}$ is in MITE on a subsystem $A\subset[N]$ 
if and only if, for every nonempty subset $B\subset A$,
there exists a site $i\notin A$ such that
\begin{equation}
    |\{j\in B\mid \{i,j\}\in E_{\G_1}\}|
    \equiv 1 \pmod{2}.
    \label{eq:cond_MITE_graph}
\end{equation}
Using this criterion, we find that for $N\ge 8$,
the state $\ket{\G_1}$ is in three-body MITE and also in $(N/2-1)$-local MITE
(see End Matter).
On the other hand, since the expectation value of $K_{\G_1}^{(i)}$ is equal to $1$,
in contrast to its infinite-temperature average of $0$,
the state is not in four-body MITE.
This is consistent with Theorem~\ref{thm:no-go}. 

As prior work on constructing eigenstates using stabilizers,
we note the existence of nonthermal eigenstates known as stabilizer scars~\cite{hartseStabilizerScars2025}.
In contrast to those constructions, the states considered here exhibit thermal behavior at the level of all spatially local observables, as well as all two-body and three-body observables, 
highlighting a clear distinction between our thermal stabilizer eigenstates and previously studied nonthermal stabilizer scar states.

\paragraph{Origin of the no-go constraint}
We outline the mechanism underlying Theorem~\ref{thm:no-go}.
The key ingredient is the following proposition, 
which generalizes the construction
of Hamiltonians having EAP statesas eigenstates~\cite{chibaExactThermalEigenstates2024} and to the state discussed
in the previous section, 
and provides a necessary and sufficient condition
for the existence of a zero-energy stabilizer eigenstate.
\begin{prop}\label{prop:general_cond}
Let $\ket{\psi_G}$ be a stabilizer state with stabilizer group $G$.
A traceless Hamiltonian $\hat{H}$ on $\Hilb_{[N]}$ admits $\ket{\psi_G}$
as a zero-energy eigenstate, i.e., $\hat{H}\ket{\psi_G}=0$,
if and only if $\hat{H}$ can be written as
\begin{equation}
  \hat{H}
  =\sum_{g\in G}
    \sum_{\substack{P,Q\in\P_{[N]}^1\setminus\{I\}\\ a_{P,Q}PQ=g}}
    c(P,Q)\bigl(P-a_{P,Q}Q\bigr),
  \label{eq:Hamiltonian_from_stabilizer}
\end{equation}
where $\P_{[N]}^1\subset\P_{[N]}$ denotes the set of Pauli strings with prefactor $1$,
$a_{P,Q}\in\{\pm1,\pm i\}$ is the uniquely determined phase from $P$ and $Q$,
and the coefficients $c(P,Q)$ can be chosen arbitrarily
\footnote{
When $a_{P,Q}\in\{\pm i\}$, the coefficients must be chosen appropriately
to ensure that $\hat{H}$ is Hermitian.
}.
\end{prop}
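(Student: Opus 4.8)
The plan is to prove the two implications separately, exploiting the Pauli-basis structure of stabilizer states. For the ``if'' direction I would check that each summand annihilates $\ket{\psi_G}$ on its own. Given $P,Q\in\P_{[N]}^1$ with $a_{P,Q}PQ=g\in G$, I compute $P\,(P-a_{P,Q}Q)=P^2-a_{P,Q}PQ=I-g$ using $P^2=I$. Since $g\in G$ stabilizes the state, $(I-g)\ket{\psi_G}=0$, and invertibility of $P$ gives $(P-a_{P,Q}Q)\ket{\psi_G}=0$. Summing over all terms yields $\hat H\ket{\psi_G}=0$; tracelessness is automatic because each $P,Q\neq I$ is a non-identity Pauli string, and Hermiticity is arranged by the choice of $c(P,Q)$ as in the footnote.

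The ``only if'' direction is where the real work lies, and I would organize it around the coset structure of the Pauli group relative to $G$. Writing $\hat H=\sum_{P\in\P_{[N]}^1}h_P P$ with $h_I=0$ (tracelessness), the key structural fact is that $P\ket{\psi_G}$ depends, up to an explicit phase, only on the symplectic coset of $P$ modulo the stabilizer: if $P,Q$ lie in the same coset then $P\ket{\psi_G}=a_{P,Q}Q\ket{\psi_G}$ (exactly the relation from the ``if'' step, read backward), whereas if they lie in different cosets then $\langle\psi_G|QP|\psi_G\rangle=0$ because the normalized product $QP$ is then not in $\pm G$, so $P\ket{\psi_G}\perp Q\ket{\psi_G}$. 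Fixing a representative $R_C$ in each of the $2^N$ cosets $C$, I can regroup $\hat H\ket{\psi_G}=\sum_C\big(\sum_{P\in C}h_P\,a_{P,R_C}\big)R_C\ket{\psi_G}$; since $\{R_C\ket{\psi_G}\}$ is an orthonormal basis of the $2^N$-dimensional space, the annihilation condition decouples into one scalar constraint per coset, namely $\sum_{P\in C}h_P\,a_{P,R_C}=0$.

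The last step is to solve these constraints and match them to the claimed form. Each is a single weighted linear relation among the coefficients $\{h_P\}_{P\in C}$, whose solution space is spanned by weighted differences of pairs -- precisely the operators $P-a_{P,Q}Q$, since the consistency relation $a_{P,Q}=a_{P,R_C}/a_{Q,R_C}$ guarantees that each such difference satisfies the constraint. For every coset other than the one containing $I$ this directly produces the claimed generators, while for the identity coset the tracelessness constraint $h_I=0$ removes the identity and leaves exactly the generators with $P,Q\neq I$.

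I expect the main obstacle to be the careful bookkeeping of the phases $a_{P,Q}\in\{\pm1,\pm i\}$ together with the interplay between Hermiticity ($h_P\in\R$) and these complex weights. The cleanest route is to first establish the identity over all complex-coefficient operators, where a dimension count confirms spanning: the $2^N$ cosets each contribute $2^N-1$ independent differences, and subtracting one for the tracelessness condition in the identity coset gives $2^N(2^N-1)-1=4^N-2^N-1$, matching $\dim\ker=4^N-2^N-1$ for the map $O\mapsto O\ket{\psi_G}$ restricted to traceless operators. Only then would I intersect with the Hermitian subspace and invoke the footnote's Hermiticity-preserving choice of coefficients to recover the statement for Hamiltonians.
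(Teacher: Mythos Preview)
Your proof is correct and shares the paper's overall architecture: sufficiency by checking that each summand annihilates $\ket{\psi_G}$, and necessity by reducing $\hat H\ket{\psi_G}=0$ to one linear constraint per coset of $\P_{[N]}^1$ modulo $G$. The paper obtains the same per-coset relation (its Eq.~\eqref{eq:key_equality}) by pairing $\hat H$ with $\bra{\psi_G}P$ and invoking the stabilizer average formula, which is exactly your orthogonality observation phrased in expectation-value language.

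Where you diverge is in the final step. The paper \emph{constructs} the coefficients explicitly, setting $c(P,Q)=-\tfrac{1}{n_P}\,a_{Q,P}\,h(Q)$ (with $n_P$ the size of $P$'s equivalence class minus the identity) and then verifying by direct substitution that this reproduces $\hat H$. You instead argue abstractly: the differences $P-a_{P,Q}Q$ satisfy the coset constraint, and a dimension count ($4^N-2^N-1$ on both sides) certifies that they span the entire kernel. Both arguments are valid. The paper's route has the advantage of giving a canonical formula for $c(P,Q)$ in terms of the Pauli coefficients of $\hat H$, which may be useful if one wants to actually produce the decomposition for a specific Hamiltonian; your route is cleaner linear algebra and makes the spanning transparent without any verification step. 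Your handling of the identity coset (where tracelessness removes $I$ and costs one dimension) and of Hermiticity (work first over $\C$, then intersect) are both sound and match what the paper leaves implicit.
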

\begin{proof}
We first prove sufficiency.
Suppose that $g\in G$ factorizes as $g=a_{P,Q}PQ$ for
$P,Q\in\P_{[N]}^1\setminus\{I\}$ and $a_{P,Q}\in\{\pm1,\pm i\}$.
Since $g\ket{\psi_G}=\ket{\psi_G}$ and $P^2=I$, we have
\begin{equation}
    a_{P,Q}Q\ket{\psi_G}=P\ket{\psi_G}.
\end{equation}
Hence, each term in Eq.~\eqref{eq:Hamiltonian_from_stabilizer}
annihilates $\ket{\psi_G}$.
Note that this argument is a straightforward generalization of the proof
of Eq.~\eqref{eq:factorized_annihilation}.

We next prove necessity.
Any traceless Hamiltonian admits a unique expansion
\begin{equation}
    \hat{H}
    =\sum_{P\in\P_{[N]}^1\setminus\{I\}} h(P)\,P
\end{equation}
with suitable coefficients $h(P)$.
From the condition $\hat{H}\ket{\psi_G}=0$, it follows that for any
$P\in\P_{[N]}^1$,
\begin{equation}
    0
    =\braket[3]{\psi_G}{P\hat{H}}{\psi_G}
    =\sum_{Q\in\P_{[N]}^1\setminus\{I\}}
      h(Q)\braket[3]{\psi_G}{PQ}{\psi_G}.
\end{equation}
Using the standard representation of a stabilizer state~\cite{fattalEntanglementStabilizerFormalism2004,heinEntanglementGraphStates2006}
\begin{equation}
    \ketbra{\psi_G}{\psi_G}
    =\frac{1}{2^N}\sum_{g\in G} g,\label{eq:stabilizer_average}
\end{equation}
we obtain
\begin{align}
    0
    &=\sum_{Q\in\P_{[N]}^1\setminus\{I\}}
      h(Q)\braket[3]{\psi_G}{PQ}{\psi_G}\nonumber\\
    &=\sum_{g\in G}
      \sum_{\substack{Q\in\P_{[N]}^1\setminus\{I\}\\ a_{P,Q}PQ=g}}
      h(Q)\,a_{P,Q}^{-1},
    \label{eq:key_equality}
\end{align}
which holds for all $P\in\P_{[N]}^1$.
From this relation, the existence of coefficients $c(P,Q)$ satisfying
Eq.~\eqref{eq:Hamiltonian_from_stabilizer} follows
(see End Matter).
\end{proof}

Note that for generic choices of the coefficients $\{c(P,Q)\}_{P,Q}$,
the Hamiltonian $\hat{H}$ is expected to be nonintegrable,
since the factorized Pauli monomials $P$ need not commute with each other,
even though all elements of the stabilizer group $G$ commute.

Proposition~\ref{prop:general_cond} shows that 
admitting a stabilizer eigenstate severely restricts the operator structure of $\hat{H}$: 
Each term must come from factorizing a stabilizer element into two Pauli operators, 
which directly obstructs reproducing generic four-body thermal correlations.
In fact, the no-go theorem follows immediately.
\begin{proof}[Proof of Theorem~\ref{thm:no-go}]
By Eq.~\eqref{eq:stabilizer_average}, 
a stabilizer state is in $k$-body MITE if and only if~\cite{heinEntanglementGraphStates2006,vandennestGraphStatesGround2008}
\begin{equation}
    k < \min_{g\in G\setminus\{I\}} |\supp(g)|\eqqcolon\delta(G),
    \label{eq:condition_fb_MITE}
\end{equation}
where $\supp(g)\subset[N]$ denotes the set of sites on which $g$
acts nontrivially.
On the other hand, if $P,Q\in\P_{[N]}^1$ satisfy $PQ\propto g$, then
\begin{equation}
    |\supp(g)| = |\supp(PQ)|
    \le |\supp(P)| + |\supp(Q)|.
\end{equation}
Therefore, whenever Eq.~\eqref{eq:condition_fb_MITE} holds, we must have
\begin{equation}
    \max\bigl(|\supp(P)|,|\supp(Q)|\bigr)
    \ge \left\lceil\frac{k+1}{2}\right\rceil.
\end{equation}
This implies that the Hamiltonian in Eq.~\eqref{eq:Hamiltonian_from_stabilizer}
contains terms acting on at least
$\left\lceil\frac{k+1}{2}\right\rceil$ sites.
Since $\hat{H}$ is assumed to be two-body,
Eq.~\eqref{eq:condition_fb_MITE} cannot hold for $k\ge 4$.
Hence, the state cannot be in $k$-body MITE for $k\ge 4$.
\end{proof}
This proof also shows more generally that, in order for stabilizer eigenstates
at infinite temperature to be in $k$-body MITE,
interactions acting on at least $\left\lceil(k+1)/2\right\rceil$ sites are necessary.
A related constraint appears in the study of graph ground states,
where $(k+1)$-body interactions are required to realize $k$-body MITE~\cite{vandennestGraphStatesGround2008}.

We also note that if one relaxes the requirement that the state be an eigenstate of a few-body Hamiltonian, 
states satisfying higher-order MITE properties can indeed exist.
At infinite temperature, the notion of $k$-body MITE coincides with the concept of $k$-uniformity~\cite{scottMultipartiteEntanglementQuantumerrorcorrecting2004}, 
and
it is known that $\lfloor 0.19N \rfloor$-uniform stabilizer states exist, 
whereas $\lceil 0.38N \rceil$-uniform stabilizer states do not exist for large $N$~\cite{arnaudExploringPureQuantum2013}.
Our results demonstrate that imposing the additional constraint of being an eigenstate of a few-body Hamiltonian is genuinely restrictive, 
suggesting that the few-body nature of interactions substantially increases the degree of nonstabilizerness (or magic) required to realize higher-order MITE.

\paragraph{Another example: star-shaped cluster states}
Using Proposition~\ref{prop:general_cond}, one can construct Hamiltonians
admitting a wide variety of stabilizer states as zero-energy eigenstates,
beyond the example in Eq.~\eqref{eq:Hamiltonian}.
As another illustration, we introduce a thermal eigenstate
constructed from cluster states~\cite{briegelPersistentEntanglementArrays2001}.

A cluster state is a paradigmatic example of a highly entangled stabilizer state,
originally introduced in the context of measurement-based quantum computation~\cite{raussendorfOneWayQuantumComputer2001,raussendorfMeasurementbasedQuantumComputation2003}.
It also plays an important role in quantum error correction and condensed-matter physics,
where it appears as a prototypical example of symmetry-protected topological phases~\cite{elseSymmetryProtectedPhasesMeasurementBased2012,sonTopologicalOrder1D2012}.
In one dimension, the cluster state can be realized as a graph state
associated with the one-dimensional cycle graph
$E_\G=\{\{i,i+1\}\mid i\in[N]\}$.
Since this state has stabilizers acting on three consecutive sites, 
it is not in $3$-local MITE.

\begin{figure}[tbp]
    \centering
    \includegraphics[width=0.7\linewidth]{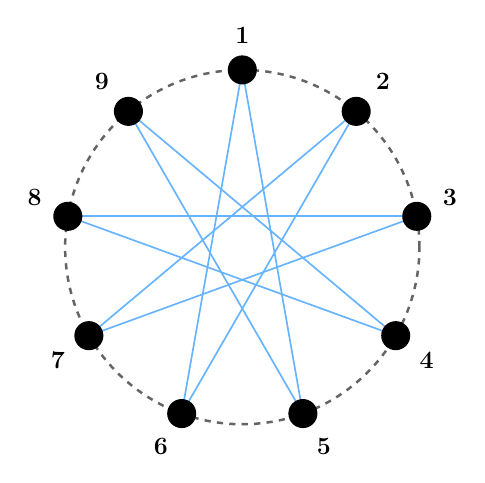}
    \caption{Schematic illustration of the graph $\G_2$ (Eq.~\eqref{eq:edge_graph_cluster})
    for $N=9$. 
    Each vertex is connected to the vertex at distance $(N-1)/2=4$.
    The corresponding graph state is in two-body MITE and also in four-local MITE. }
    \label{fig:star_cluster}
\end{figure}

However, by considering a graph that is isomorphic to the cycle graph via a nonlocal permutation of sites
(Fig.~\ref{fig:star_cluster}),
\begin{equation}
    E_{\G_2}
    =\{\{i,i+(N-1)/2\}\mid i\in[N]\},
    \label{eq:edge_graph_cluster}
\end{equation}
one can construct a graph state $\ket{\G_2}$ that is in local MITE.
Here we assume that $N$ is odd and $N\ge5$.
Owing to the structure of the underlying graph,
we refer to this state as a \emph{star-shaped cluster state}.
As in the previous section, 
applying the criterion~\eqref{eq:cond_MITE_graph}
shows that this state is in two-body MITE and also in $(N-1)/2$-local MITE.
Since cluster states admit 
an exact matrix product state representation~\cite{schonSequentialGenerationEntangled2005},
this example demonstrates that 
matrix product states can also represent complex thermal equilibrium states 
when combined with nonlocal permutations of lattice sites.

By applying the construction in Eq.~\eqref{eq:Hamiltonian_from_stabilizer},
one finds that a two-body, translationally invariant Hamiltonian
that admits the star-shaped cluster state $\ket{\G_2}$ as a zero-energy eigenstate can be written as
\begin{align}
    \hat{H}
    =\sum_{i\in [N]}
    \Bigl[
    J_1(\sigma^{z}_i\sigma^{z}_{i+2}
        &-\sigma^{x}_i\sigma^{x}_{i+1})
    +
    J_2(\sigma^{z}_i\sigma^{z}_{i+1}
        -\sigma^x_i)\nonumber\\
    &+
    J_3(\sigma^{y}_i\sigma^{z}_{i+1}
        -\sigma^{z}_i\sigma^{y}_{i+1})
    \Bigr],
    \label{eq:Hamiltonian_star_cluster}
\end{align}
where $J_1,J_2,J_3\in\R$ are arbitrary real coefficients.
For $J_1\neq 0$, this Hamiltonian is expected to be generically nonintegrable,
because even for the case of $J_3=0$ 
it has been shown that this Hamiltonian has no nontrivial local conserved quantities~\cite{shiraishiCompleteClassificationIntegrability2025}.

\paragraph{Conclusion and outlook}
We have constructed exact infinite-temperature thermal eigenstates of two-body spin-1/2 Hamiltonians using stabilizer states. 
We also established a sharp no-go theorem: no zero-energy stabilizer eigenstate of a nontrivial two-body Hamiltonian can satisfy four-body MITE. This bound is tight, as demonstrated by a translationally invariant two-body model admitting a stabilizer eigenstate thermal for all two-body and three-body observables and even for $O(N)$-local subsystems.
Our results provide controlled analytic examples relevant to ETH-like behavior
beyond dimer constructions, and clarify the possibilities and limitations of realizing thermal eigenstates within stabilizer and graph-state frameworks.

Several natural extensions of the present work merit further investigation.
One direction is to generalize our analysis to stabilizer states of higher-spin
systems~\cite{zhouQuantumComputationBased2003,hostensStabilizerStatesClifford2005}, 
where richer stabilizer structures and interaction patterns may arise.
Another important question is whether our construction can similarly be used to construct thermal pure states at finite temperature, as is the case for the EAP states~\cite{chibaExactThermalEigenstates2024,yonetaThermalPureStates2024}, 
rather than being limited to the infinite-temperature regime considered here.
Related progress on generalizations of EAP states using dimers that are not fully entangled~\cite{mohapatraExactVolumeLawEntangled2025,mestyanCrosscapStatesTunable2025}
may also be relevant in this direction.

More broadly, our results suggest the existence of a fundamental interplay
between the few-body nature of interactions, the degree of few-body microscopic
thermal equilibrium that can be achieved, and the amount of nonstabilizerness, or magic.
Clarifying this trade-off quantitatively may shed light on the structural
distinction between stabilizer states and thermal many-body quantum states 
arising from physically realistic Hamiltonians.

\paragraph{Acknowledgments}
The author thanks Balázs Pozsgay and Niklas Mueller for their valuable comments on the manuscript.
This work was supported by KAKENHI Grant
No. JP25KJ0833 from the Japan Society for the
Promotion of Science (JSPS) and FoPM, a WINGS Program, 
the University of Tokyo.
The author also acknowledges support from JSR Fellowship, the University of Tokyo and the ASPIRE program (Grant No. JPMJAP25A1) by Japan Science and Technology Agency (JST). 

\paragraph{Note added}
During the completion of this manuscript,
the author became aware of an independent related work by Dooley~\cite{dooleyParentHamiltoniansStabilizer2026}.
Where the two works overlap, 
specifically, 
in the sufficient condition for constructing stabilizer zero-energy eigenstates, 
the results are consistent.

\bibliography{bibliography,suppl}

\section*{END MATTER}
\subsection*{Energy-level spacings of Eq.~\eqref{eq:Hamiltonian}}
Numerical results for the energy-level spacing statistics of the Hamiltonian in Eq.~\eqref{eq:Hamiltonian} are shown in Fig.~\ref{fig:level_dist}.
Specifically, we employ the ratio of consecutive level spacings as a diagnostic~\cite{oganesyanLocalizationInteractingFermions2007,atasDistributionRatioConsecutive2013}.
The Hamiltonian possesses the following discrete symmetries:
\begin{itemize}
\item lattice translation $T$;
\item spatial inversion $P$;
\item $\pi$ rotation about the $x$ axis, $P_X \coloneqq \bigotimes_i X_i$;
\item $\pi$ rotation about the $z$ axis, $P_Z \coloneqq \bigotimes_i Z_i$.
\end{itemize}
In the numerical calculations, 
we focus on the symmetry sector defined by $T = P = P_X = P_Z = 1$ for system size $N = 22$.
We note that when $N$ is a multiple of $4$ (resp.~$6$), 
this sector contains additional sublattice-induced antisymmetries (resp.~symmetries).

Figure~\ref{fig:level_dist} shows that the Hamiltonian in Eq.~\eqref{eq:Hamiltonian} exhibits random-matrix-like spectral statistics, strongly suggesting its nonintegrability.

\begin{figure}[htbp]
    \centering
    \includegraphics[width=\linewidth]{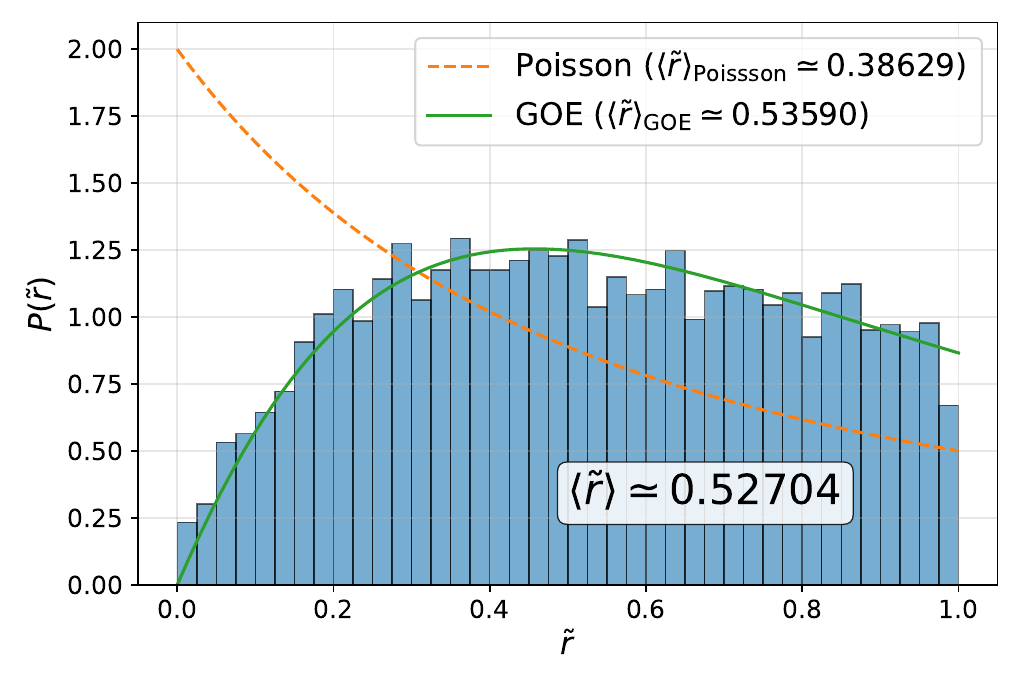}
    \caption{The distribution $P(\tilde r)$ is shown, 
    together with its mean value $\langle \tilde r \rangle$, 
    where $\tilde r = \min\{r, r^{-1}\} \in [0,1]$ and $r$ denotes the ratio of consecutive level spacings~\cite{oganesyanLocalizationInteractingFermions2007,atasDistributionRatioConsecutive2013}.
    The histogram is obtained from exact diagonalization of Eq.~\eqref{eq:Hamiltonian}, 
    using the central half of the eigenvalues 
    in the sector with $N=22$ and $T = P = P_X = P_Z = 1$.
    The dotted and solid curves show the exact results for Poisson statistics and the Gaussian orthogonal ensemble (GOE), respectively~\cite{atasDistributionRatioConsecutive2013}.
    TPoisson statistics characterize integrable systems,
    while GOE statistics describe nonintegrable systems with time-reversal symmetry~\cite{berryLevelClusteringRegular1977,bohigasCharacterizationChaoticQuantum1984}.
    The level statistics of Eq.~\eqref{eq:Hamiltonian} are well described by those of the GOE.
}
\label{fig:level_dist}
\end{figure}

\subsection*{MITE property of graph states}

We prove the local and few-body MITE properties of $\ket{\G_1}$ and $\ket{\G_2}$.

First, we show that $\ket{\G_1}$ is in MITE on
$A=\{1,\dots,N/2-1\}$ using Eq.~\eqref{eq:cond_MITE_graph}.
Take an arbitrary nonempty subset $B\subset A$ and let $b_0=\min B$.
Define $i=b_0+N/2-1$.
Since $N/2\le i\le N-2$, we have $i\in [N]\setminus A$.
Moreover, among the sites in $B$, the only site connected to $i$ is $b_0$.
Therefore, this $i$ satisfies the condition in Eq.~\eqref{eq:cond_MITE_graph}.
By an entirely analogous argument to the one above, 
one finds that $\ket{\G_2}$ is also in MITE on
$A=\{1,\dots,(N-1)/2\}$.

Next, we establish the few-body MITE properties.
We first show that $\ket{\G_2}$ is in MITE on $\{1,n\}$ for any $n\in[N]$.
Since $\ket{\G_2}$ has already been shown to be $(N-1)/2$-local MITE,
it suffices to consider the case $n=(N+1)/2$.
In this case, the site $(N+3)/2$ is connected to $1$ but not to $(N+1)/2$,
whereas the site $N$ is connected to $(N+1)/2$ but not to $1$.
Hence, Eq.~\eqref{eq:cond_MITE_graph} holds for any nonempty subset
$B\subset\{1,(N+1)/2\}$ with a suitable choice of $i$.

Finally, we show that $\ket{\G_1}$ is in MITE on
$A=\{1,m,n\}$ for any three distinct sites $1<m<n$ in $[N]$.
Take an arbitrary nonempty subset $B\subset A$.
\begin{itemize}
    \item If $|B|=1$, then $b\in B$ is connected to exactly three sites.
    Therefore, there exists at least one site not contained in $A$
    that is connected to $b$.
    Choosing such a site as $i$, Eq.~\eqref{eq:cond_MITE_graph} is satisfied.
    
    \item If $|B|=2$, then there are at most two sites connected to both elements of $B$.
    Moreover, if the two sites in $B$ are connected to each other,
    there is no site connected to both of them.
    As a result, there exist at least two sites that are not in $B$
    and are connected to exactly one of the two sites in $B$.
    Consequently, among the sites with this property,
    there exists at least one site that is not contained in $A$.
    Choosing such a site as $i$, Eq.~\eqref{eq:cond_MITE_graph} is satisfied.
    
    \item If $|B|=3$, namely $B=A$,
    consider first the case $A=\{1,2,3\}$.
    Then the site $i=N/2+1$ is connected to all three sites,
    and thus satisfies Eq.~\eqref{eq:cond_MITE_graph}.
    Therefore, we may assume $m\ge 3$.
    When $m\neq N/2$, there always exist two consecutive sites
    $j_1,j_1+1$ that are connected only to $1$ (but not to $m$),
    and two consecutive sites $j_m,j_m+1$ that are connected only to $m$
    (but not to $1$).
    If $n$ coincides with any of these sites,
    then since consecutive sites are not connected to each other,
    one of $i\in\{n\pm1\}$ satisfies Eq.~\eqref{eq:cond_MITE_graph}.
    If none of $j_1,j_1+1,j_m,j_m+1$ coincides with $n$,
    then there exists at least one site among them that is not connected to $n$.
    Choosing such a site as $i$ again satisfies Eq.~\eqref{eq:cond_MITE_graph}.
    The case $m=N/2$ can be treated in the same manner as above.
\end{itemize}
From the above arguments, we conclude that $\ket{\G_1}$ is in three-body MITE.

\subsection*{Complete proof of Proposition~\ref{prop:general_cond}}

Assume that Eq.~\eqref{eq:key_equality} holds. We show that one can construct coefficients $c(P,Q)$ satisfying Eq.~\eqref{eq:Hamiltonian_from_stabilizer}.
We introduce a relation on $\P_{[N]}^1$ by
\begin{equation}
    P\sim Q \;\Leftrightarrow\; \exists\, a_{P,Q}\ \text{such that } a_{P,Q}PQ\in G .
\end{equation}
Since $G$ is a group, this relation is an equivalence relation.
We define $a_{P,Q}=0$ for $P\not\sim Q$. 
Note that $a_{P,Q}a_{Q,P}=1$ if $P\sim Q$. 

For $P\in\P_{[N]}^1\setminus\{I\}$, we define
\begin{equation}
    n_P=\bigl|\{Q\in \P_{[N]}^1\setminus\{I\}\mid P\sim Q\}\bigr| .
\end{equation}
Note that for $P,Q\in\P_{[N]}^1\setminus\{I\}$,
\begin{equation}
    P\sim Q \;\Rightarrow\; n_P=n_Q .
    \label{eq:lem1}
\end{equation}

For $P,Q\in\P_{[N]}^1\setminus\{I\}$, we define
\begin{equation}
    c(P,Q)\coloneqq -\frac{1}{n_P}a_{Q,P}h(Q) .
    \label{eq:lem2}
\end{equation}
From Eq.~\eqref{eq:key_equality}, it follows that
\begin{align}
    \sum_{\substack{Q\in\P_{[N]}^1\setminus\{I\}\\ P\sim Q}}c(P,Q)
    &=-\frac{1}{n_P}\sum_{g\in G}
      \sum_{\substack{Q\in\P_{[N]}^1\setminus\{I\}\\ a_{P,Q}PQ=g}}
      h(Q)\,a_{P,Q}^{-1}\nonumber\\
      &=0 .
      \label{eq:lem3}
\end{align}

With the coefficients $c(P,Q)$ defined above, 
we can expand the right-hand side of Eq.~\eqref{eq:Hamiltonian_from_stabilizer} as
\begin{equation}
    \sum_{\substack{P,Q\in\P_{[N]}^1\setminus\{I\}\\ P\sim Q}}
    c(P,Q)\bigl(P-a_{P,Q}Q\bigr)
    =
    \sum_{R\in\P_{[N]}^1\setminus\{I\}}
    h'(R)R,
\end{equation}
where $h'(R)$ is given by
\begin{equation}
    h'(R)=\sum_{\substack{Q\in\P_{[N]}^1\setminus\{I\}\\ R\sim Q}}
    \bigl(c(R,Q)-c(Q,R)a_{Q,R}\bigr) .
\end{equation}
Using Eqs.~\eqref{eq:lem1}, \eqref{eq:lem2}, and \eqref{eq:lem3}, we obtain
\begin{align}
    h'(R)
    &=h(R)\sum_{\substack{Q\in\P_{[N]}^1\setminus\{I\}\\ R\sim Q}}
      \frac{1}{n_Q}a_{R,Q}a_{Q,R}\nonumber\\
    &=h(R)\frac{1}{n_R}
      \sum_{\substack{Q\in\P_{[N]}^1\setminus\{I\}\\ R\sim Q}}1\nonumber\\
    &=h(R) .
\end{align}
Therefore, Eq.~\eqref{eq:Hamiltonian_from_stabilizer} is satisfied, 
completing the proof.
\end{document}